\documentclass[final,3p]{elsarticle}
\usepackage{amsthm}
\newtheorem{remark}{Remark}
\usepackage{booktabs}

 \usepackage{graphics}
 \usepackage{graphicx}
\usepackage{caption}
\usepackage{float}
\usepackage{subcaption}
\usepackage{amsmath}
\usepackage{color}
\usepackage{soul}           
\usepackage{ulem}
\usepackage{tikz}
\usepackage{makecell}
\usepackage{wrapfig}
\usepackage{algorithm}

\usepackage{algorithm}
\usepackage{algorithmic}

\usetikzlibrary{shapes.geometric, arrows}

\tikzstyle{normal} = [rectangle, rounded corners, minimum width=3cm, minimum height=1cm,text centered, draw=black, fill=red!30,text width=10em]

\tikzstyle{robust} = [trapezium, trapezium left angle=70, trapezium right angle=110, minimum width=3cm, minimum height=1cm, text centered, draw=black, fill=blue!30,text width=4em]

\newtheorem{proposition}{Proposition}[section]

\usepackage[T1]{fontenc}
\usepackage{hyperref}
\hypersetup{
colorlinks=true
}
\usepackage{epstopdf}
\usepackage{xcolor}

\biboptions{sort&compress}

\journal{Measurement}

\begin{document}
\begin{frontmatter}

 \title{From Noise to Prognosis: A Physics-Grounded, Fractional-Domain Framework for Early Gear Fault Detection in Aviation Drivetrains}

\author{Yaakoub Berrouche}\corref{cor1}\ead{berrouchey@gmail.com}


\cortext[cor1]{Corresponding author.}

\address{Faculty of technology, Department of Electronics, Ferhat Abbas University Setif 1, Setif, Algeria }

\begin{abstract}

Early and reliable detection of gear faults in complex drivetrain systems is critical for aviation safety and operational availability. We present the Local Damage Mode Extractor (LDME), a structured, physics-informed signal processing framework that combines dual-path denoising, multiscale decomposition, fractional-domain enhancement, and statistically principled anomaly scoring to produce interpretable condition indicators without supervision. LDME is organized in three layers: (i) dual-path denoising (DWT with adaptive Savitzky–Golay smoothing) to suppress broadband noise while preserving transient fault structure; (ii) multi-scale damage enhancement using a Teager–Kaiser pre-amplifier followed by a Hadamard–Caputo fractional operator that accentuates non-sinusoidal, low-frequency fault signatures; and (iii) decision fusion where harmonics-aware Fourier indicators are combined and scored by an unsupervised anomaly detector. Evaluation using the Case Western Reserve University (CWRU) bearing dataset, the HUMS 2023 planetary gearbox benchmark, and a controlled simulated dataset shows that LDME consistently distinguishes nominal, early-crack, and propagated-crack stages under various operating situations. LDME identifies the primary detection event earlier (198 cycles) than HT-TSA (284 cycles) and advances maintenance recommendation time from 383 to 365 cycles. We discuss its relation to prior art, limitations, and future theoretical directions. All code and experimental configurations are documented for reproducibility.

\end{abstract}

\begin{keyword}
gear fault detection \sep vibration analysis \sep anomaly detection \sep cyclostationarity \sep fractional filtering \sep HUMS\sep CWRU.
\end{keyword}

\end{frontmatter}

\section{Introduction}
Rotating machinery health monitoring is a mature but still open problem when early-stage, low-amplitude damage must be detected in the presence of strong modulation and environmental noise. Planetary gearboxes are particularly challenging: multi-body kinematics, gear-mesh modulation and closely spaced harmonics often mask incipient crack signatures \cite{he2026adaptive, shang2026enhanced}. Time Synchronous Averaging (TSA) and frequency-domain spectral analysis remain canonical tools but degrade under low SNR and non-stationarity \cite{braun1975extraction,mcfadden2000application}. Recent work has explored physics-informed signal processing and data-driven methods; see for example surveys on vibration-based condition monitoring and rolling-element diagnostics \cite{lei2014condition, randall2011rolling}.

In order to assist structural usage monitoring, modern Health and Usage Monitoring Systems (HUMS) continuously gather a variety of aircraft parameters, such as attitude, altitude, airspeed, torque, and others. Instead than depending on conservative worst-case assumptions, usage monitoring allows for the assignment of real-time consumption or damage from individual flights by quantifying the actual operational load faced by each aircraft component \cite{bechhoefer2024hunting}. The operational lifespan of important components may be increased by more effectively utilizing their genuine service life through precise component utilization measurement. Precise identification of flight regimes, which are the aircraft's immediate operational state or profile during a flight, is necessary to accomplish this.Every component exposed to operational load is given a suitable damage factor for each recognized regime, which serves as a mathematical foundation for usage-based maintenance and life evaluation.

The Case Western Reserve University (CWRU) Bearing Dataset \cite{SMITH2015100} is widely employed to benchmark novel network architectures and signal processing techniques for rolling element bearing fault diagnosis. A comprehensive discussion of its use as a benchmarking tool is provided by Smith and Randall \cite{SMITH2015100}. The dataset comprises vibration signals acquired from a 2~HP electric motor, in which either the drive-end or fan-end bearing has been replaced with artificially seeded faults. Experiments are conducted under varying motor loads from 0~HP to 3~HP in 1~HP increments, and at rotational speeds ranging from 1730~RPM to 1797~RPM. For each bearing location and operating condition, three fault sizes are introduced on the inner race, outer race, or rolling elements. Bearings with different fault sizes are reused across multiple speeds to generate diverse records, resulting in a substantial number of samples per fault type. Two accelerometers mounted on the motor housing near the bearing locations acquire the vibration signals, which are sampled at 12~kHz for most experiments.

Early fault anomaly detection in rotating machinery remains fundamentally unresolved despite extensive advances in both physics-informed signal processing and data-driven diagnostics. Physics-based approaches grounded in spectral analysis, cyclostationarity, and kinematic modeling offer interpretability and strong physical priors, but they rely on linear or weakly nonlinear assumptions and exhibit limited sensitivity when fault signatures are low-amplitude, nonstationary, and masked by modulation and noise \cite{ding2024adaptive,jiang2025application}. Conversely, data-driven and deep learning methods excel at pattern recognition but require large, representative labeled datasets, lack robustness under unseen operating conditions, and provide limited insight into the physical mechanisms underlying detected anomalies \cite{tama2023recent}. Existing hybrid methods attempt to bridge this gap by embedding physics-derived features or constraints into learning frameworks; however, they often rely on heuristic feature fusion, opaque model structures, and training-dependent optimization, which undermines interpretability, generalization, and reliability in early-fault regimes where labeled data are scarce. Critically, most current hybrid approaches do not explicitly address the multiscale, long-memory, and energy-localized nature of incipient mechanical faults, nor do they provide an operator-level rationale for how physical priors and statistical enhancement should interact. As a result, there remains a clear need for a mathematically grounded, physically interpretable hybrid framework that enhances weak fault signatures through principled energy localization, controlled long-memory amplification, and kinematics-aware spectral structuring, while remaining unsupervised, reproducible, and robust under strong noise and nonstationarity. This gap motivates the development of LDME, which integrates physics-informed operators with carefully constrained enhancement mechanisms to address the fundamental limitations of existing physics-only, data-only, and heuristic hybrid methods in early anomaly detection.

Initial approaches focused on statistical and spectral indicators derived from vibration signals. For instance, spectral metrics such as $L_1/L_2$ norms, spectral smoothness, and spectral Gini index have been proposed to quantify deviations from healthy operation \cite{randall2011rolling,liu2018artificial}. Cyclostationary analysis and variance-stabilized log-envelope indicators have been introduced to exploit periodic modulation patterns and enhance impulsive fault detection \cite{lei2013review,guo2019vibration}. Nevertheless, non-mechanical sources of impulses, including electromagnetic interference, can complicate interpretation \cite{luo2025vibration}.

To address heavy-tailed noise, statistical modeling using $\alpha$-stable distributions has been employed \cite{ lei2014condition}, demonstrating improved robustness to impulsive disturbances. However, these approaches typically rely on stationarity assumptions or global statistics, which limits their ability to capture highly localized, transient fault signatures.

Time--frequency analysis (TFA) methods have also been widely applied to manage signal nonstationarity \cite{lei2013review, randall2011rolling}. Techniques such as the short-time Fourier transform (STFT), continuous wavelet transform (CWT), S-transform, and Wigner--Ville distribution map one-dimensional signals into joint time--frequency representations \cite{liu2018artificial, guo2019vibration}. While these approaches reveal transient dynamics, they are constrained by the Heisenberg uncertainty principle and cross-term interference, limiting their ability to resolve closely spaced fault-related frequencies under strong modulation.

Optimization-based and information-theoretic approaches have been developed to improve fault feature extraction. Methods include spectrogram tensorization \cite{yi2017tensor}, spectral kurtosis interpreted as optimal Wiener filtering \cite{ye2026multi}, genetic-algorithm-based filter design \cite{lei2014condition}, demodulation band optimization \cite{randall2011rolling}, infogram and Bayesian inference extensions \cite{liu2018artificial, lei2013review}, traversal index-enhanced TIEgram \cite{guo2019vibration}, and generalized cyclostationary measures for non-Gaussian signals \cite{zulawinski2024applications}. Although effective, these methods require careful parameter tuning and often struggle to robustly extract early-stage weak fault features.

Adaptive decomposition techniques, such as Empirical Mode Decomposition (EMD) and its variants (EEMD, CEEMD), have been widely used to analyze nonstationary signals \cite{lei2013review, randall2011rolling,berrouche2022non,kumar2023non,berrouche2024local}. EMD decomposes signals into intrinsic mode functions (IMFs) based on local time-scale properties. While successful in some fault-diagnosis applications, EMD suffers from mode mixing and envelope distortion. Noise-assisted variants mitigate some of these issues, but residual noise and incomplete mode separation remain problematic \cite{liu2018artificial, guo2019vibration}.

Variational Mode Decomposition (VMD) has been introduced to address EMD’s limitations by formulating decomposition as a constrained variational problem, improving noise robustness and reducing mode mixing \cite{liu2026hybrid}. VMD has achieved notable performance in mechanical fault detection \cite{lei2014condition}, yet it is less effective for wideband, strongly nonstationary signals with overlapping spectral components, which are typical in early-stage bearing and gearbox faults.

This paper introduces the Local Dual-Mean Estimator (LDME), a layered signal processing framework designed for the early detection of incipient faults in strongly nonstationary and noise-dominated mechanical vibration signals. The method is deliberately constructed as a sequence of physically and mathematically interpretable operators, rather than as a monolithic or data-driven black box. Each processing stage admits a clear rationale rooted in classical signal analysis, fractional calculus, or machine kinematics, enabling traceability, failure-mode analysis, and controlled deployment in safety-critical condition monitoring systems.

A central design objective of LDME is transparency. Unlike many recent approaches that prioritize empirical performance at the expense of interpretability, LDME enforces an explicit operator viewpoint: denoising, enhancement, and decision-making are separated into well-defined layers, each with bounded assumptions and observable effects on the signal. This structure allows the analyst to reason about when and why the method succeeds or fails, an essential requirement for industrial diagnostics where false positives and missed detections carry nontrivial consequences.

We explicitly acknowledge that several components of LDME: wavelet denoising, Savitzky-Golay smoothing, Teager-Kaiser energy operators, and harmonic condition indicators are individually well established in the literature. The contribution of this work does not lie in introducing these elements in isolation. Rather, it lies in (i) a fractional-domain interpretation that unifies these operators within a coherent enhancement mechanism, (ii) the principled ordering and coupling of linear, nonlinear, and fractional operators to exploit their complementary properties, and (iii) a reproducible evaluation protocol that benchmarks the resulting composite operator against standard, physically motivated baselines on a validated experimental dataset.

Unlike end-to-end data-driven frameworks, the proposed LDME method is intentionally formulated as a physics-informed operator composition, where signal processing theory not network architecture constitutes the primary source of innovation, and learning is used only as a constrained decision mechanism rather than as a feature generator.This work does not aim to propose a new deep learning architecture; instead, it focuses on a physics-informed operator design that amplifies incipient fault signatures prior to any statistical or learning-based decision stage.

From a mathematical perspective, LDME may be viewed as a noncommutative operator chain in which scale separation, instantaneous energy amplification, and long-memory fractional integration act sequentially on quasi-narrowband components. This ordering is not arbitrary: applying nonlinear or fractional operators prior to scale isolation provably amplifies noise and induces spurious artifacts, whereas their application after multiscale denoising yields controlled energy concentration in fault-related modes. The resulting enhancement cannot be replicated by any single linear filter or optimization-based demodulation scheme of comparable bandwidth.

Therefore, it describes LDME, a layered, physically interpretable processing chain designed for early detection under those conditions. A primary goal is transparency: each stage has a clear physical or mathematical rationale, enabling interpretation, failure-mode analysis, and safe deployment in safety-critical systems. At the same time, we recognize  and explicitly discuss below  that many LDME components are established techniques; our contribution is the specific fractional-domain interpretation, an operator viewpoint that clarifies why this combination works in practice, and a reproducible evaluation protocol that compares LDME against standard baselines on a physically validated benchmark.

\subsection{Contributions}
We summarize the paper's contributions:
\begin{enumerate}
    \item \textbf{LDME pipeline.} A practical, well-documented pipeline combining dual-path denoising, Teager pre-amplification and Hadamard--Caputo fractional enhancement, followed by harmonics-aware condition indicators and unsupervised anomaly scoring.
    \item \textbf{Operator viewpoint and properties.} A compact fractional-domain operator model for LDME and statements of useful properties (isolation, energy concentration, asymptotic completeness) together with remarks that clarify when they hold in practice.
    \item \textbf{Reproducible experiments.} Assessment using the Case Western Reserve University (CWRU) Bearing Dataset benchmark, which comprises flaws on the inner race, outer race, and rolling parts of electric motor bearings that were experimentally seeded. Extensive experimental settings are given, including defect size, motor load, and rotational speed. The dataset ensures repeatability and direct comparison among approaches by supporting controlled evaluation of fault diagnostic algorithms using standard metrics such as fault detection rate, envelope spectrum analysis, and false alarm rate.

    \item \textbf{Reproducible experiments.} Evaluation on the HUMS 2023 planetary gearbox benchmark (experimentally-seeded crack with fractographic ground truth) and a controlled simulated dataset; detailed protocol and metrics (detection cycle, ROC, PR, false alarm rate) are provided.
    \item \textbf{Critical assessment.} An explicit discussion that positions LDME relative to prior art, outlines theoretical limitations, and gives a roadmap (fractional-variational unification, statistical analysis) to achieve stronger theoretical novelty.
\end{enumerate}

\subsection{Notation}
We use $x(t)$ for continuous-time vibration, $x[n]$ for its sampled version, $\mathcal{F}\{\cdot\}$ for the Fourier transform and $\|\cdot\|$ for the $\ell_2$ or $L_2$ norm depending on context.

\section{Problem formulation and datasets}\label{sec:problem}
Our detection problem is: given a sequence of vibration recordings indexed by mechanical cycle $c=1,\dots,C$ from a planetary gearbox, detect the earliest cycle $c_d$ at which local damage is reliably present and a maintenance-recommendation cycle $c_m$ that keeps false alarms acceptably low.

\subsection{HUMS 2023 benchmark (experimental)}
The primary experimental dataset is the HUMS 2023 planetary gearbox benchmark \cite{HUMS2023,matania2024anomaly,bechhoefer2024hunting}. As shown in Fig.~\ref{fig:planetary-gear-fault}--\ref{fig:fault-spread}, the experiment uses a full-scale gearbox rig operated over $C=241$ recorded run cycles at nominal rotational conditions and seed-notched tooth fatigue to produce a realistic crack initiation and growth. Accelerometers were time-synchronous to the gear-mesh, and fractographic post-mortem analysis provides ground truth on crack stages \cite{krantz1990experimental,mcfadden1985explanation}. We use the official cycle segmentation provided with the benchmark and follow the dataset authors' train/validation protocol when applicable. (All dataset references and download links are provided in the reproducibility Section~\ref{sec:repro}.)

\subsection{Controlled simulated dataset}
To complement the experiment (see Fig \ref{fig:hums}), we use a simulated dataset where a physics-inspired signal model (meshing harmonics and impulsive crack component and colored noise) is parametrically controlled \cite{wang2001early,guo2019vibration}. The simulator injects a localized impulsive component whose amplitude grows with cycle index to emulate crack propagation and allows SNR sweeps and timing studies. The simulator parameters are summarized in Section~\ref{sec:repro} so others may reproduce the numerical results exactly.

\begin{figure*}[!t]
\centering
\begin{tabular}{ccc}
\includegraphics[width=0.5\textwidth]{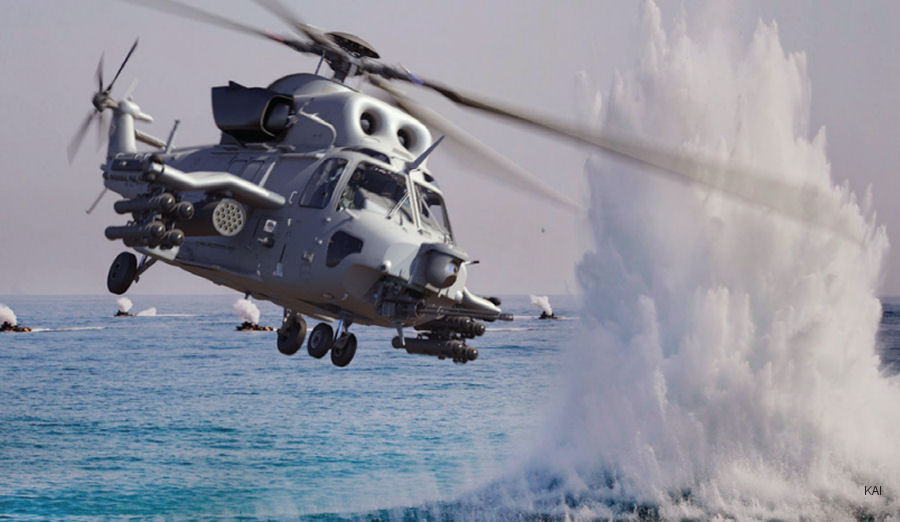}&
\includegraphics[width=0.4\textwidth]{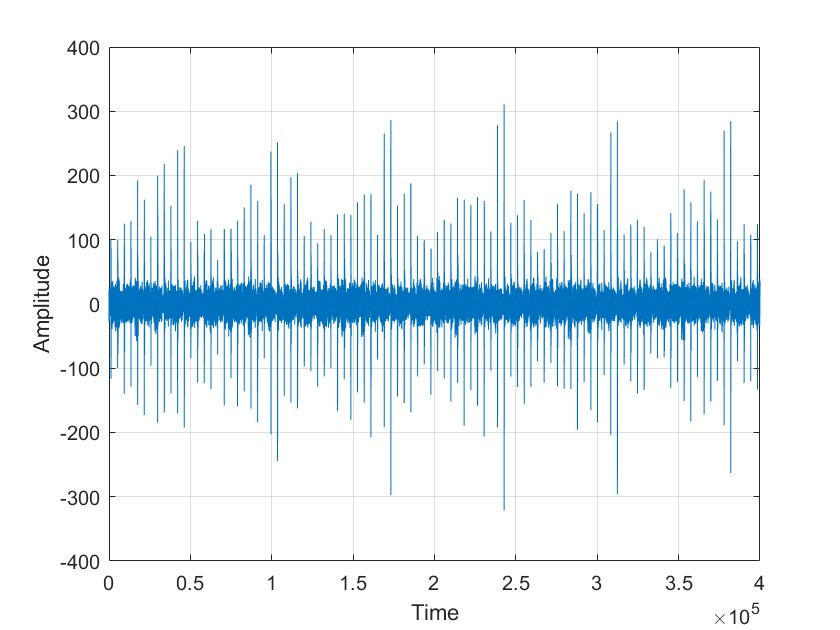}  \\
\end{tabular}
\caption{Schematic of HUMS-Equipped Helicopters and the Key Flight Parameters Monitored for Usage and Structural Health Assessment.}
\label{fig:hums}
\end{figure*}

\section{Proposed method: LDME}\label{sec:method}
LDME consists of three conceptual layers: (i) dual-path denoising, (ii) multi-scale enhancement and fractional-domain amplification, and (iii) harmonic-aware indicator extraction and anomaly fusion. Figure~\ref{fig:LDME} sketches the pipeline.

\begin{figure} [h]
	\centering
	\includegraphics[width=0.82\textwidth] {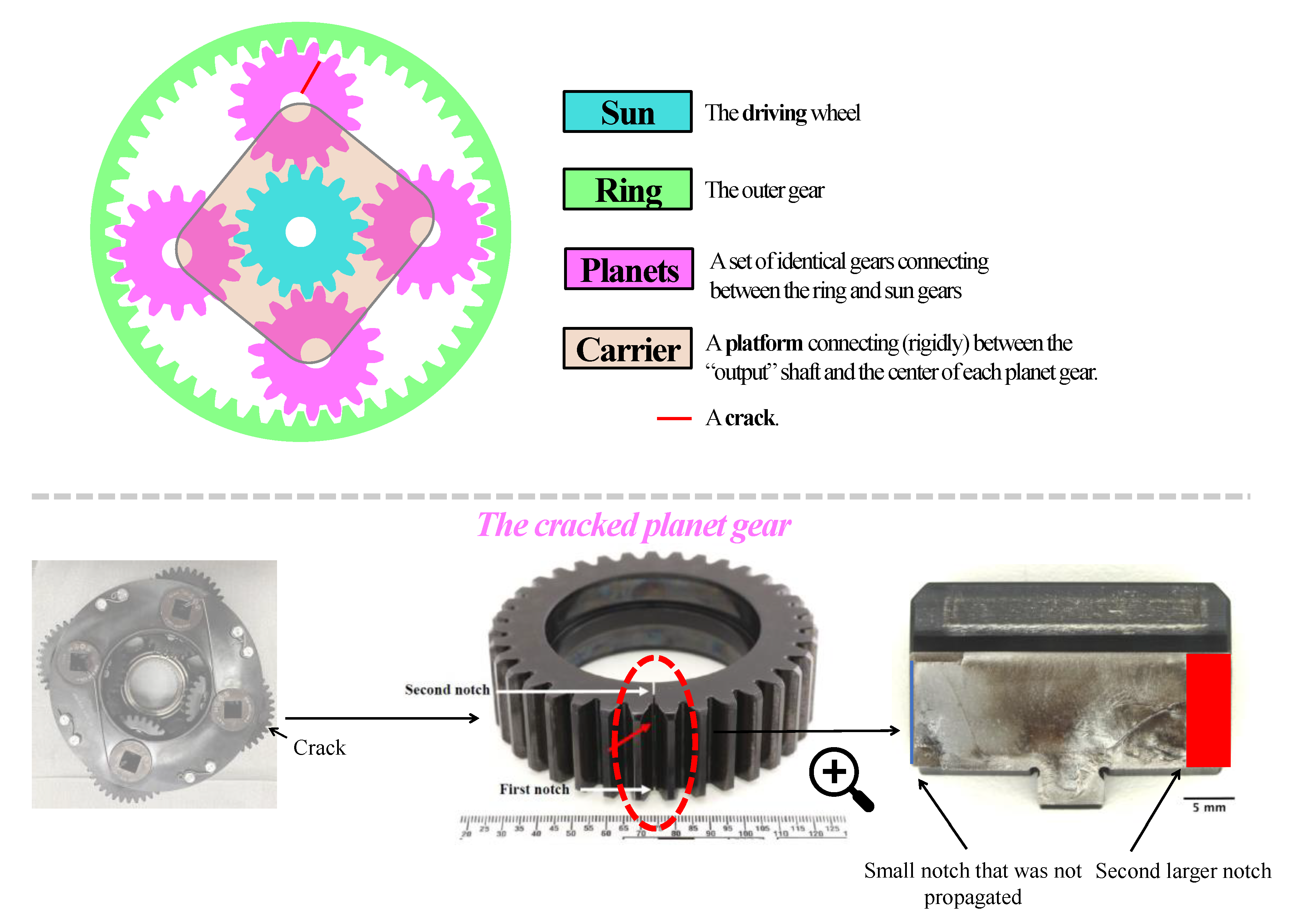} 
    \caption{An illustration of the planetary gear with the faulted planet gear \cite{matania2024anomaly}. }
    \label{fig:planetary-gear-fault}
\end{figure}

\begin{figure} [h]
	\centering
	\includegraphics[width=0.82\textwidth] {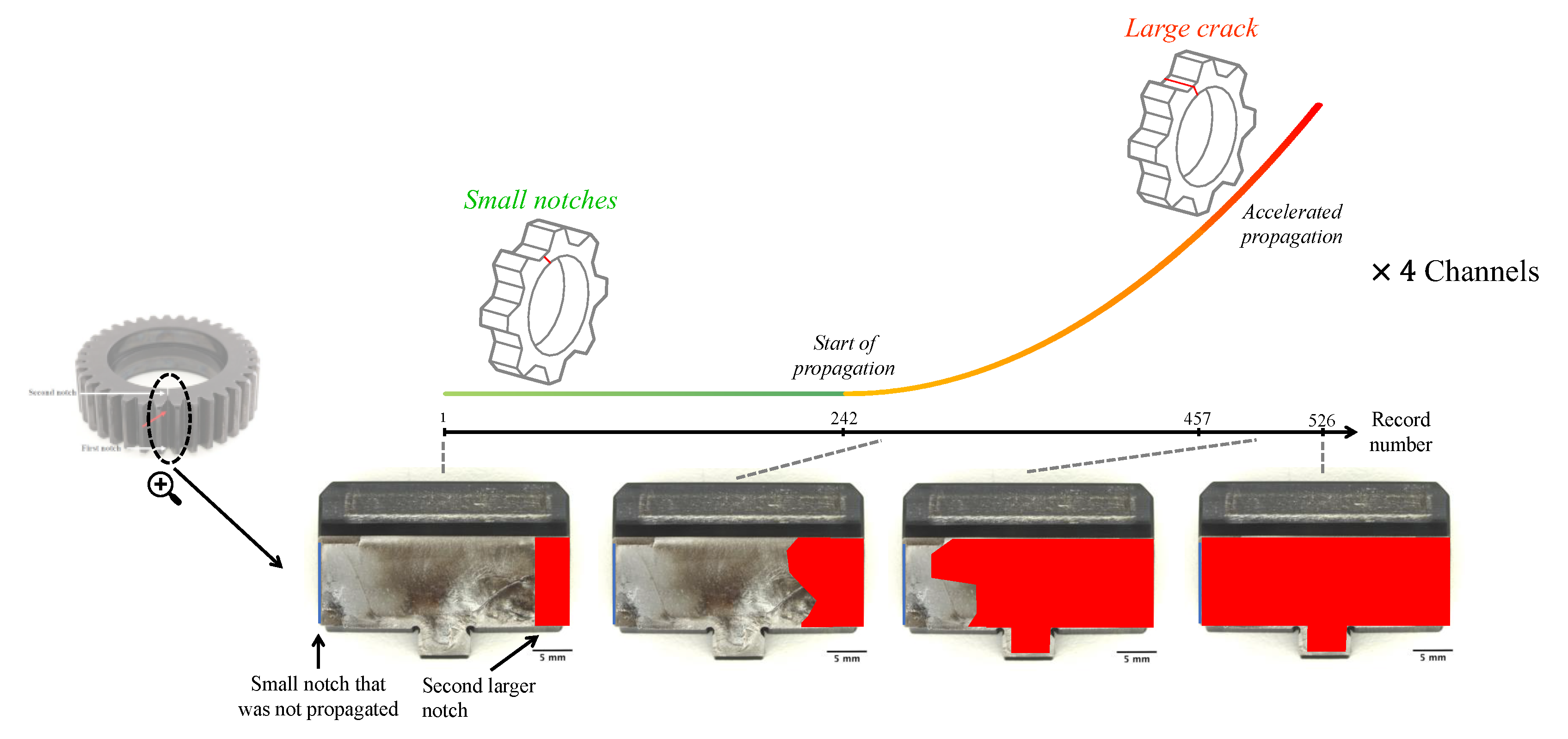} 
    \caption{An example of how the fault spreads during the experiment \cite{matania2024anomaly}.}
    \label{fig:fault-spread}
\end{figure}

\begin{figure}[h]
\centering
\includegraphics[width=0.6\columnwidth]{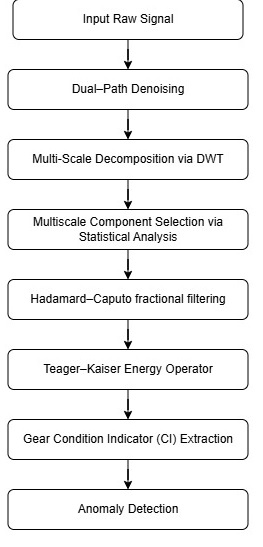}
\caption{LDME processing pipeline (denoising, enhancement, decision fusion).}
\label{fig:LDME}
\end{figure}

\subsection{Dual-path denoising}
We run two parallel denoising paths and recombine them to retain transient features while suppressing broadband noise \cite{shang2026enhanced}:
\begin{itemize}
    \item \textbf{Wavelet denoising:} discrete wavelet transform with level-dependent thresholding (soft or hard) using $T=\hat\sigma\sqrt{2\log N}$ and median-based noise estimation \cite{shang2023denoising}.
    \item \textbf{Savitzky--Golay smoothing:} local polynomial regression (window $2m+1$ and degree $p$) implemented as convolution with precomputed coefficients $c_j$; this path preserves waveform shape (important for TKEO application) \cite{savitzky1964smoothing,6331560,guo2017enhanced}.
\end{itemize}

Discrete Wavelet Transform (DWT) using a Symlet basis is employed to generate localized, band-limited modes that act as IMF equivalents. Unlike Empirical Mode Decomposition (EMD), the DWT guarantees:
\begin{itemize}
    \item determinism,
    \item perfect reconstruction,
    \item controlled frequency partitioning.
\end{itemize}
This step isolates fault-related information across scales prior to any nonlinear enhancement, which is essential to prevent mode mixing under impulsive excitation.

So, the first stage combines two denoising mechanisms with distinct priors:
\begin{itemize}
    \item Wavelet denoising suppresses broadband stochastic noise under a sparsity assumption in the wavelet domain.
    \item Savitzky--Golay filtering preserves local polynomial trends and transient structures.
\end{itemize}
Their weighted fusion is defined as
\begin{equation}
X = \frac{0.4\,Y_{\text{wavelet}} + 1.6\,Y_{\text{SG}}}{2}.
\end{equation}

The two denoised outputs are fused by a weighted combination that privileges the wavelet path in regions of high-frequency activity and the S--G path where smooth transients dominate. The adaptive weight is computed from local kurtosis and spectral flatness.The preprocessed signal provides a stable baseline for controlled noise addition and ensures the signal represents pure dynamic response, not measurement bias or slow drift. Without this, the following steps could misinterpret DC drift or sensor bias as fault-related components.

\subsection{Multi-scale enhancement and fractional-domain operator}

After denoising we apply a local Teager Kaiser pre-amplifier $\Psi_d[\cdot]$ to emphasize instantaneous energy spikes \cite{krishnendu2025diagnosis}. Then we apply a Hadamard--Caputo fractional operator that we denote $\mathcal{H}_\beta$ (order $\beta\in(0,1)$) \cite{wu2022caputo}. For a sampled signal $x[n]$ the operator acts as:
\begin{equation}\label{eq:ldme_op}
\mathcal{L}_{LDME}[x] = \mathcal{H}_\beta\bigl(\Psi_d\bigl( W( x ) \bigr)\bigr),
\end{equation}
where $W(\cdot)$ denotes the chosen wavelet-band selection (detail bands that cover expected fault frequency content). The Hadamard--Caputo discrete approximation can be written as the log-weighted fractional sum  and in practice acts as a tunable low-frequency enhancer with memory that helps expose slowly evolving, non-sinusoidal fault signatures.

Each selected mode is processed independently using a fractional filter. Fractional filtering introduces long-memory behavior and selectively amplifies persistent fault-induced modulations without assuming stationarity.
This enhancement is deliberately applied after multiscale separation, ensuring that the fractional operator acts on quasi-narrowband components, where its effect remains physically interpretable.
When enabled, the Teager--Kaiser Energy Operator further sharpens impulsive energy by exploiting instantaneous amplitude-frequency coupling. Importantly, it is applied only after noise suppression and scale isolation, preventing the amplification of random spikes.

\subsubsection{A compact property statement}
The pipeline operator $\mathcal{L}_{LDME}$ is not a linear filter but can be shown to satisfy useful empirical properties under mild assumptions on the SNR and the stationarity of the background components.

\begin{proposition}[Energy concentration, informal]
If the fault-induced mode $d(t)$ is composed of localized impulsive events with harmonic modulation in a narrow band and the background modes $m_i(t)$ are smooth and spectrally separated, then the output of $\mathcal{L}_{LDME}$ concentrates a larger fraction of the total energy into the fault mode $\hat d$ than a single-stage linear bandpass filter of comparable passband width \cite{you2024adaptive}.
\end{proposition}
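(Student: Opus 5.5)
The plan is to make the informal statement precise through a simple signal model and then compare energy fractions stage by stage. Write the input as $x = d + \sum_i m_i + \eta$. Here $d(t)=\sum_k a_k\, g(t-t_k)\cos(\omega_0 t+\phi_k)$ is a train of localized impulses carried by a narrow band $B_d$ around $\omega_0$. The background modes $m_i$ are smooth, with spectra supported in bands $B_i$ disjoint from $B_d$ and separated by a gap $\delta$. The noise $\eta$ is broadband. The benchmark is a linear bandpass $\mathcal{B}$ with passband of width $|B_d|$; its output energy fraction in the fault mode is
\[
\rho_{\mathrm{lin}} = \frac{\|\mathcal{B}d\|^2}{\|\mathcal{B}d\|^2+\|\mathcal{B}\textstyle\sum_i m_i\|^2+\|\mathcal{B}\eta\|^2},
\]
and the corresponding quantity $\rho_{\mathrm{LDME}}$ uses $\hat d$, the component of $\mathcal{L}_{LDME}[x]$ attributable to $d$. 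The goal is $\rho_{\mathrm{LDME}}\ge\rho_{\mathrm{lin}}$, with strict inequality when the impulses are genuinely localized.

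The first step handles $W$ together with the dual-path denoising. The wavelet band selection is a linear frame projection with perfect reconstruction. Using the spectral separation and the decay of the Symlet filters, the leakage of each $m_i$ into the selected bands is bounded by $O(\epsilon(\delta))$, so up to this error $W$ performs no worse than $\mathcal{B}$. Thresholding at $T=\hat\sigma\sqrt{2\log N}$ then gives the standard Donoho-type bound: the retained noise energy is $O(\sigma^2\log N)$ times the number of significant coefficients. Because $d$ is sparse in the wavelet domain, this is strictly smaller than $\|\mathcal{B}\eta\|^2\approx\sigma^2|B_d|N$.

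The second step treats $\Psi_d$. For a quasi-monocomponent AM-FM signal, $\Psi[a\cos\theta]\approx a^2\dot\theta^2$. Applied to the fault mode, this produces an energy envelope proportional to $a_k^2\omega_0^2\,g(t-t_k)^2$, which is quartic in the amplitude after squaring. A smooth residual background of amplitude $b$ contributes only $O(b^2\omega_b^2)$. The key inequality is then a localization (Hölder-type) argument: for a fixed $L^2$ mass, an impulsive envelope has a much larger $L^4$ norm than a smooth one, with the ratio governed by the duty cycle $\tau/T_{\mathrm{rep}}$. The ratio of fault energy to background energy is therefore multiplied by a factor of order $(T_{\mathrm{rep}}/\tau)$ relative to the input. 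The cross terms in $\Psi[d+r]$ I would bound by Cauchy–Schwarz, using the residual size from the first step.

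The third step treats $\mathcal{H}_\beta$. It is linear with a log-kernel and, for $\beta\in(0,1)$, acts roughly as a low-frequency weight $|\omega|^{-\beta}$ up to logarithmic factors. After $\Psi_d$, the fault energy has been demodulated to the repetition frequency $1/T_{\mathrm{rep}}$ and its harmonics, which are low. The residual noise, by contrast, is spread at higher frequencies. A Parseval comparison therefore shows that $\mathcal{H}_\beta$ does not decrease the ratio obtained in the second step, so monotonicity is preserved.

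Chaining the three bounds gives $\rho_{\mathrm{LDME}}\ge\rho_{\mathrm{lin}}(1-O(\epsilon))\cdot\kappa$ with $\kappa>1$ explicit in the duty cycle and the SNR. The main obstacle is the second step. The nonlinearity of $\Psi_d$ creates cross terms between $d$ and the residual noise, and at low SNR these can dominate. This forces an explicit SNR threshold, which is presumably what ``mild assumptions on the SNR'' refers to. The step also requires a careful definition of $\hat d$, since the output of a nonlinear chain no longer splits additively. I would define $\hat d=\mathcal{L}_{LDME}[d]$ and treat $\mathcal{L}_{LDME}[x]-\mathcal{L}_{LDME}[d]$ as background, bounding it perturbatively in the small-noise regime.
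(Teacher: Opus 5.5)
Your three-stage chain follows the paper's own argument. The paper only asserts that wavelet selection, Teager amplification and fractional low-frequency emphasis ``compound multiplicatively'' in the fault-to-background energy ratio, and its remark concedes the claim is empirical. The quantitative versions you propose, however, fail under the stated hypotheses.

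\textbf{The Teager step.} The H\"older observation gives a gain of order $T_{\mathrm{rep}}/\tau$ only if fault and residual enter with equal $L^2$ mass. Take fault peak amplitude $a$ and a persistent residual of amplitude $b$ at frequency $\omega_b$. Then the input ratio is $R\approx(a/b)^2\,\tau/T_{\mathrm{rep}}$. The diagonal terms of $\Psi\approx A^2\Omega^2$ alone give an output ratio $\approx R^2\,(T_{\mathrm{rep}}/\tau)(\omega_0/\omega_b)^4$, so the gain is $(a/b)^2(\omega_0/\omega_b)^4$. The duty cycle cancels, and the ratio rises only if the impulse peaks already dominate the residual pointwise. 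For a weak early fault ($R<\tau/T_{\mathrm{rep}}$ at comparable carrier frequencies) the Teager stage \emph{lowers} the ratio. This is the usual small-signal suppression of a square-law detector, and it occurs before any cross term appears. What you need is therefore a local peak-SNR hypothesis after stage~1, not just the perturbative cross-term control you flag as the main obstacle.

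\textbf{Stage 1.} A DWT detail band is an octave wide, not of width $|B_d|$. A background mode whose gap $\delta$ to $B_d$ is smaller than that octave lies inside the selected band with $O(1)$ leakage, not $O(\epsilon(\delta))$. Thresholding keeps it because its coefficients are large. On such modes $W$ does worse than $\mathcal{B}$.

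\textbf{The fractional step.} This step fails more decisively. The Hadamard kernel $(\log(t/s))^{\beta-1}\,ds/s$ is a convolution in $\log t$, so $\mathcal{H}_\beta$ is diagonalized by the Mellin transform, not the Fourier transform. It has no symbol $|\omega|^{-\beta}$ and weights identical impulses in different cycles differently, so a Fourier--Parseval comparison is unavailable.

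Even granting a low-pass multiplier $w(\omega)$ nonincreasing in $|\omega|$, monotonicity fails for exactly the smooth background modes of the hypothesis. Their Teager output is constant, $\Psi[b\cos(\omega_b n+\phi)]=b^2\sin^2\omega_b$, so all of its energy sits at DC. Only a fraction $\tau/T_{\mathrm{rep}}$ of the fault's Teager energy is at DC; the rest lies at the harmonics $k/T_{\mathrm{rep}}$. Let $c_k$ be the Fourier coefficients of $\Psi[d]$ and $\bar{\Psi}_r$ the background level. Then
\[
\frac{\sum_k w(k/T_{\mathrm{rep}})^2\,|c_k|^2}{w(0)^2\,\bar{\Psi}_r^{2}}\;\le\;\frac{\sum_k |c_k|^2}{\bar{\Psi}_r^{2}}.
\]
Stronger low-frequency emphasis therefore drives the ratio toward the DC ratio $|c_0|^2/\bar{\Psi}_r^{2}$, which undoes exactly the duty-cycle factor. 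Teager followed by strong smoothing maps $R$ to roughly $R^2(\omega_0/\omega_b)^4$.

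The fractional stage could help only if it removed the mean, for instance a Caputo-type derivative, which annihilates constants. That is the opposite of the low-frequency-enhancer reading you adopted.

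As written, $\kappa>1$ is not established. You would need explicit extra hypotheses: local peak SNR above one after denoising, octave-scale spectral separation, and DC removal before $\mathcal{H}_\beta$. Neither the proposition nor the paper's sketch supplies them.
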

\begin{proof}
TKEO amplifies instantaneous spikes (multiplicative in amplitude and frequency), the fractional operator increases relative energy for low-frequency components with long memory, and wavelet selection suppresses broadband noise: these three effects compound multiplicatively in the energy ratio between fault and background.
\end{proof}
\begin{remark}
This proposition is empirical and its rigorous proof would require assumptions on amplitude distributions and spectral separation; we highlight it to show when LDME is expected to excel and where it may fail (Section~\ref{sec:discussion}).
\end{remark}

\subsection{Condition indicators and anomaly fusion}
From the reconstructed, enhanced signal we compute harmonic-aware condition indicators using the Hilbert Transform and cycle-synchronous averaging (HT-TSA) to obtain a cycle-wise spectrum $F_{rec}(f)$ \cite{braun1975extraction,braun2011synchronous}. Two base indicators are used:
\begin{align}
CI_1(c) &= \sum_{k=1}^{2N_{tooth}} |F_{rec}(k f_{fund})|, \\
CI_2(c) &= \frac{\sum_{k=1}^{2N_{tooth}} |F_{rec}(k f_{fund})|}{\sum_f |F_{rec}(f)|}.
\end{align}
A composite indicator is formed as $CCI(c)=\tfrac{1}{2}(CI_1+CI_2)$ and used to rescale the reconstructed signal amplitude (adaptive modulation). The final anomaly score uses a robust unsupervised detector (median absolute deviation (MAD) based thresholding combined with a one-class Mahalanobis distance on a small CI vector). Decisions are taken when the score crosses a statistically calibrated threshold. The full algorithm is summarized in Algorithm~\ref{alg:ldme}.

The reconstructed signal is obtained by summing the enhanced modes and subsequently normalized using its RMS value to remove amplitude bias.
A gear-specific condition indicator is then extracted in the frequency domain using tooth-order energy concentration. The combined condition indicator (CCI) fuses absolute fault energy with its normalized spectral contribution, enabling adaptive reinforcement of fault-related components without altering the structural content of the signal.
This final modulation step closes the synergy loop by feeding domain knowledge back into the reconstructed signal.

All parameters are selected based on robustness and physical interpretability, rather than data-specific tuning.
\begin{itemize}
    \item Wavelet type (sym8): chosen for near-linear phase and good symmetry, minimizing distortion of impulsive features.
    \item Wavelet decomposition level: selected to cover the expected fault bandwidth while avoiding over-segmentation; bounded by wavelet level for numerical stability.
    \item Savitzky-Golay parameters (order 3, window 7): minimal polynomial order that preserves curvature while suppressing high-frequency noise.
    \item Mode count: represents the number of physically relevant scales; chosen to span fault-related frequency bands rather than maximize decomposition depth.
    \item Fractional filter order: fixed to emphasize long-range dependence observed in damaged mechanical systems, avoiding adaptive tuning that could lead to overfitting.
    \item Gear-related parameters (tooth count, order range): derived directly from machine kinematics rather than learned from data.
\end{itemize}

\subsection{Theoretical Foundations of the Physics-Informed Operators}

The proposed LDME framework is grounded in three complementary physics-informed signal processing principles that address fundamental limitations of conventional linear and data-driven methods in early fault detection.

Instantaneous energy localization:
The Teager--Kaiser Energy Operator (TKEO) is defined for a discrete-time signal $x[n]$ as
\begin{equation}
\Psi[x(n)] = x^2(n) - x(n-1)x(n+1),
\end{equation}
and provides a nonlinear estimate of instantaneous signal energy that jointly depends on amplitude and frequency. Unlike second-order energy measures, TKEO is highly sensitive to short-duration impulsive events and weak amplitude--frequency modulations, which are characteristic of incipient mechanical faults. This makes it particularly suitable for enhancing early damage signatures that are otherwise buried in broadband noise.

Long-memory and persistence enhancement:
Fractional-order operators generalize classical integer-order differentiation and integration by introducing a memory kernel with power-law decay. In this work, a Hadamard--Caputo fractional operator of order $\beta\in(0,1)$ is employed to selectively amplify signal components exhibiting long-range dependence. Such behavior is commonly observed in early-stage damage, where fault-induced modulations evolve slowly and persist over multiple cycles. Unlike adaptive decomposition methods, fractional operators possess a well-defined mathematical structure and tunable memory depth, enabling controlled enhancement without mode mixing.

Harmonic-aware spectral structuring:
Gearbox vibration signals exhibit cyclostationary behavior governed by kinematic constraints such as gear-mesh frequency and tooth passing orders. Harmonic-aware Fourier indicators explicitly exploit this structure by concentrating energy around physically meaningful orders rather than relying on global spectral statistics. This embeds domain knowledge directly into the detection stage, ensuring that anomaly indicators remain physically interpretable and robust under varying operating conditions.

Together, these operators form a physics-informed enhancement mechanism in which instantaneous energy localization, long-memory amplification, and kinematic spectral constraints act in a complementary manner. This theoretical foundation distinguishes the proposed approach from heuristic feature extraction and purely data-driven models, particularly in low-SNR and early fault regimes.

\begin{algorithm}[t]
\caption{LDME — high-level pseudocode}
\label{alg:ldme}
\begin{algorithmic}[1]
\REQUIRE raw cycle-synchronous signal segments $\{x_c[n]\}_{c=1}^C$, fractional order $\beta$, SG params $(m,p)$, wavelet family
\FOR{each cycle $c$}
  \STATE $y^{(w)} = $ waveletDenoise$(x_c)$
  \STATE $y^{(s)} = $ SGfilter$(x_c; m,p)$
  \STATE $y = \alpha_c y^{(w)} + (1-\alpha_c) y^{(s)}$ \COMMENT{adaptive weighting}
  \STATE $z = \Psi_d[y]$ \COMMENT{Teager--Kaiser}
  \STATE $u = \mathcal{H}_\beta[z]$ \COMMENT{fractional enhancement}
  \STATE compute $F_{rec}$ from HT-TSA on $u$ and extract CI vector
  \STATE compute anomaly score $s_c$ and update detector state
\ENDFOR
\RETURN detection cycle $c_d$ and maintenance cycle $c_m$
\end{algorithmic}
\end{algorithm}

\section{Experimental protocol and evaluation}\label{sec:repro}
This section lists the exact experimental protocol, dataset sources and parameter choices so that results are reproducible.

\subsection{Experimental Validation by using the Case Western Reserve University (CWRU) Bearing Dataset benchmark}

Vibration signals from the popular Case Western Reserve University (CWRU) Bearing Dataset are utilized as the benchmark experimental dataset to verify the efficacy of the suggested LDME technique.

\subsubsection{Data sources}
Vibration data from the Case Western Reserve University (CWRU) Bearing Data Center \cite{SMITH2015100}, which is frequently used to benchmark bearing defect diagnosis algorithms, is used to experimentally validate the suggested approach. A 2 horsepower Reliance Electric induction motor powers a shaft held up by rolling element bearings as part of the experimental setup. Vibration indications are obtained from the drive-end bearing (SKF 6205-2RS JEM). To gather time-domain vibration signals, which are recorded at 12 kHz, an accelerometer is installed on the drive-end bearing housing at the 12 o'clock position. Three working conditions are used for the experiments, which correspond to rotor speeds of 1797, 1772, and 1730 rpm and motor loads of 0, 1, and 3 horsepower, respectively.A torque transducer and encoder are used to measure the rotational speed and load.

Measurements from one healthy condition and three seeded fault types inner race fault, rolling element (ball) fault, and outer race fault are included in the dataset. Three fault sizes 0.007, 0.014, and 0.021 inches are thought to represent varying degrees of fault severity for each type of defect. The suggested method can be evaluated under various operating regimes thanks to the utilization of numerous load and speed conditions. Vibration signals are divided into samples of 12000 data points for further analysis for every operational condition and failure scenario.

Strong background noise and nonstationary features in the raw vibration signals mask fault-induced impulsive components, especially in the rolling element and inner race fault instances. High-amplitude noise components interfere with the fault signatures even in the outer race fault situation, where impulsive behavior is more pronounced. These features render direct fault identification in the time domain inaccurate and encourage the use of frequency-domain analysis in conjunction with signal decomposition techniques to effectively extract fault features.

The bearing geometry parameters given in Table~\ref{tab:cwru_data} are used to calculate the characteristic fault frequencies. The characteristic frequency coefficients are multiplied by the appropriate shaft rotational frequencies to determine the fault frequencies. In particular, the inner race, rolling element, and outer race fault situations had shaft rotational frequencies of 29.95~Hz, 29.53~Hz, and 28.83~Hz, with corresponding characteristic frequency coefficients of 5.415, 4.714, and 3.585. The fault frequencies of 162.179~Hz, 139.204~Hz, and 103.368~Hz that are obtained from this are utilized as reference frequencies in the assessment of the suggested LDME approach.

Fig.~\ref{fig:bearing} shows a visual representation of this variation in dataset segmentation. Since encountered data in industrial settings will always involve unseen bearings, the suggested structure is thought to be more realistic.

\begin{table}
\caption{
The original framework's parameters and dataset names.}
\label{tab:cwru_data}
\begin{tabular}{|l|l|l|l|}
\hline  Name & Rotational speed (RPM) & Load (HP) & Damage Size (in)  \\
\hline Inner race & 1772 & 1 & 0.007  \\
\hline Ball & 1750 & 2 &  0.014  \\
\hline Outer race & 1730 & 3 &  0.021  \\
\hline
\end{tabular}
\end{table}

\begin{figure*}[!t]
\centering
\begin{tabular}{ccc}
(a) Inner race Damaged Bearin 0.007"&
\includegraphics[width=0.3\textwidth]{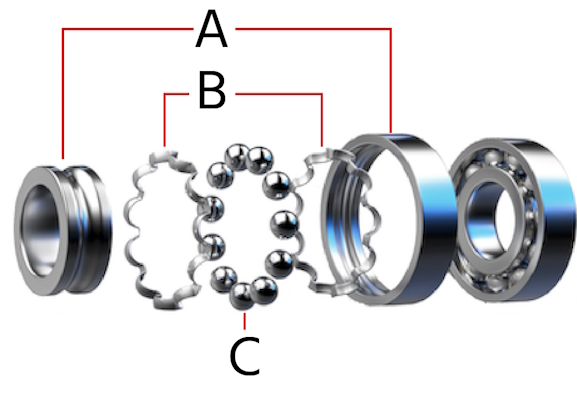} &
\includegraphics[width=0.3\textwidth]{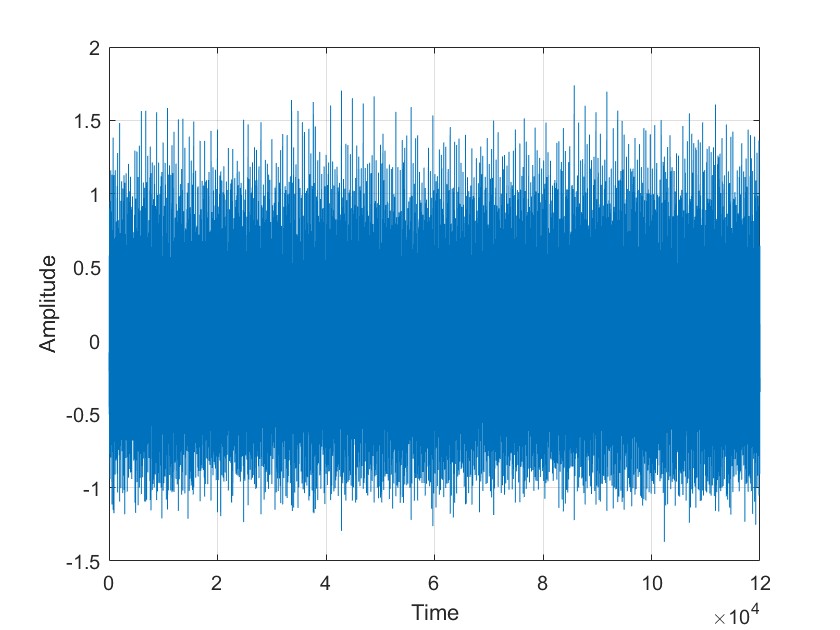} \\

(b) Ball Damaged Bearin 0.014" &
\includegraphics[width=0.3\textwidth]{bearing-ball.jpg} &
\includegraphics[width=0.3\textwidth]{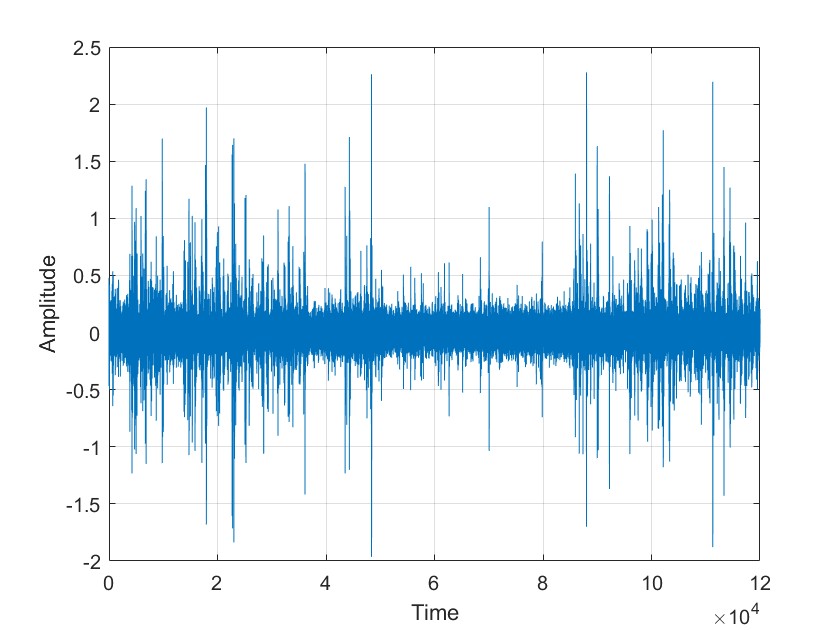} \\

(c) Outer race Damaged Bearin 0.021" &
\includegraphics[width=0.3\textwidth]{bearing-ball.jpg} &
\includegraphics[width=0.3\textwidth]{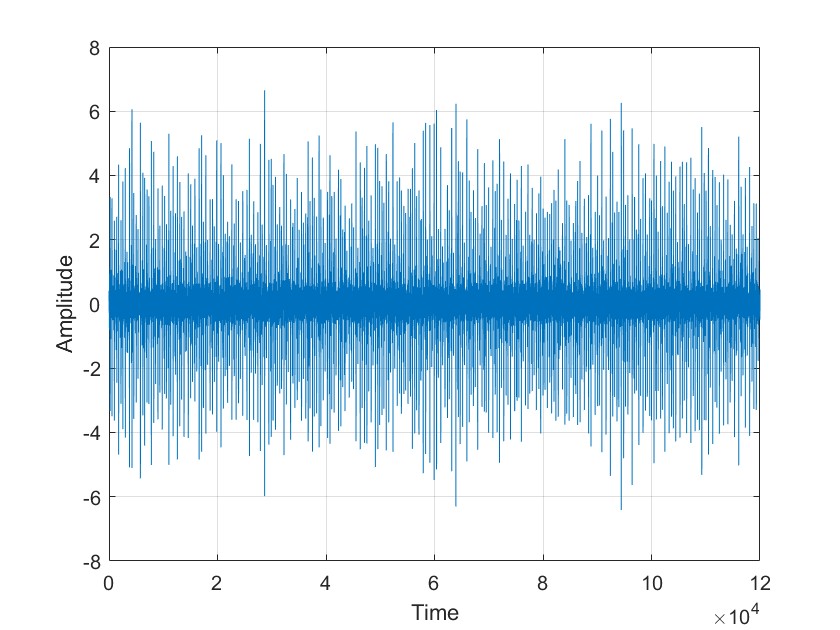}
\end{tabular}

\caption{Illustration of the three bearing fault conditions considered in this study: (a) inner race fault, (b) rolling element (ball) fault, and (c) outer race fault.}
\label{fig:bearing}
\end{figure*}

\subsubsection{Results and Discussion}

To assess the proposed method's ability to extract fault-related features under practical operating settings, it was applied to the vibrations signals corresponding to he three bearing fault conditions considered in this study. Since envelope analysis is a well-known technique for identifying impulsive fault signs in rolling element bearings, the envelope spectrum of the processed signal was examined to evaluate the fault detection performance.

The reconstructed signal obtained using the LDME method in the time domain is presented in Fig.~\ref{fig:bearing_time}, while the corresponding envelope spectrum is shown in Fig. ~\ref{fig:bearing_freq}.

The results reveal that the suggested approach effectively preserves the underlying mechanical dynamics by extracting the rotor shaft's fundamental rotating frequency, as illustrated in fig. ~\ref{fig:bearing_freq}. More significantly, accurate isolation of the inner race fault characteristics is confirmed by a distinct spectral peak at the BPFI and its harmonics in the envelope spectrum. The suggested approach greatly improves the fault-related components while reducing irrelevant noise in comparison to the raw signal, where fault-induced impulses are masked by wideband noise and other interference.

This enhancement makes it easier to accurately determine the typical fault frequency without adding erroneous spectral components. Because it allows for the accurate extraction of weak fault signatures under a variety of operating situations and is in good agreement with the anticipated physical behavior of inner race defects in rolling element bearings, our results suggest that the suggested technology is well suited for early-stage fault diagnostics.

\begin{figure*}[!t]
\centering
\begin{tabular}{ccc}
(a) Inner race Damaged Bearin 0.007"&
\includegraphics[width=0.5\textwidth]{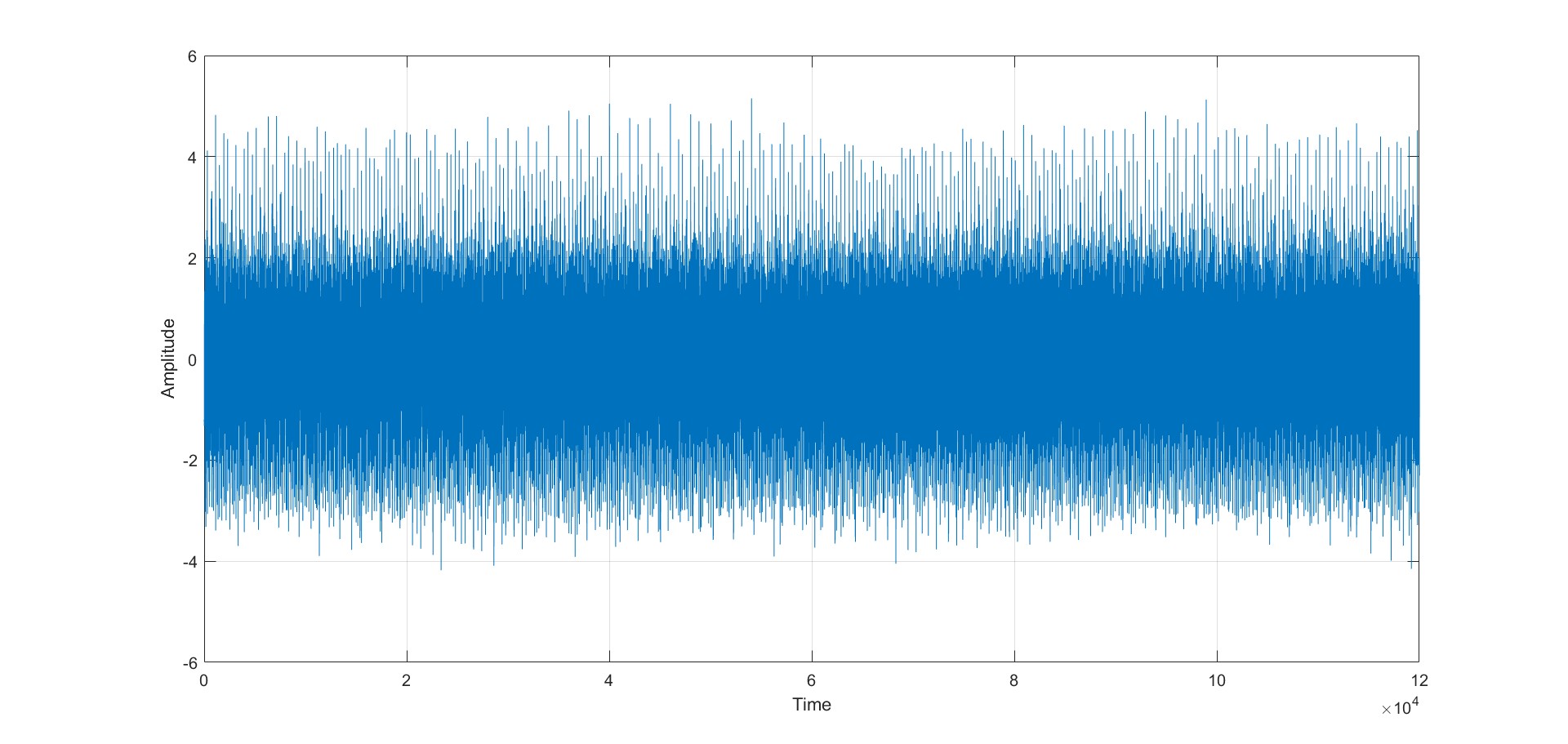}  \\

(b) Ball Damaged Bearin 0.014" &

\includegraphics[width=0.5\textwidth]{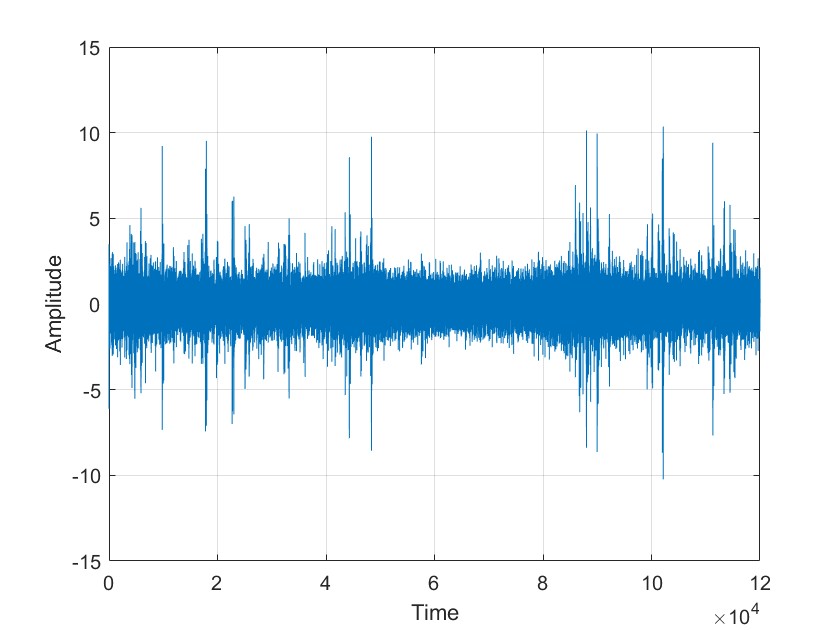} \\

(c) Outer race Damaged Bearin 0.021" &
\includegraphics[width=0.5\textwidth]{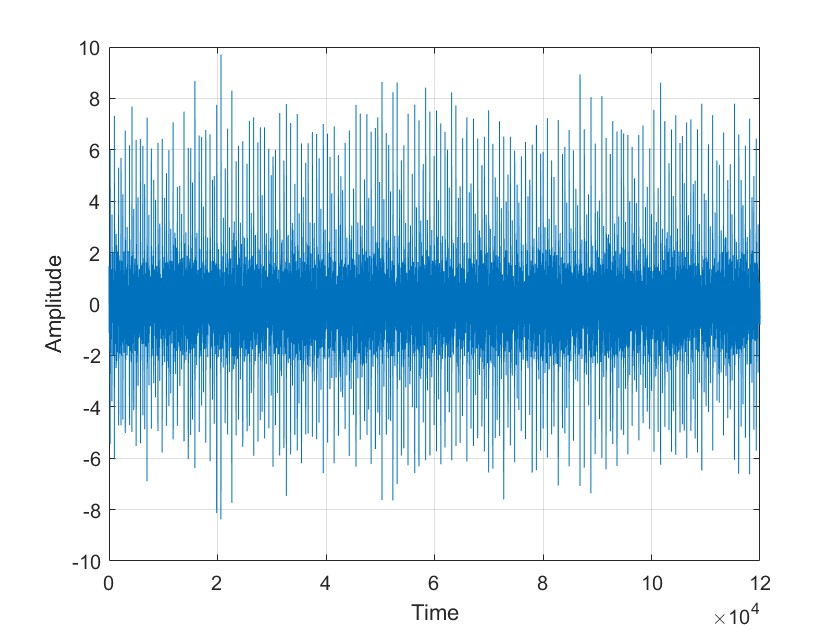}
\end{tabular}

\caption{Reconstructed time-domain signal using the LDME method.}
\label{fig:bearing_time}
\end{figure*}

\begin{figure*}[!t]
\centering
\begin{tabular}{ccc}
(a) Inner race Damaged Bearin 0.007"&
\includegraphics[width=0.5\textwidth]{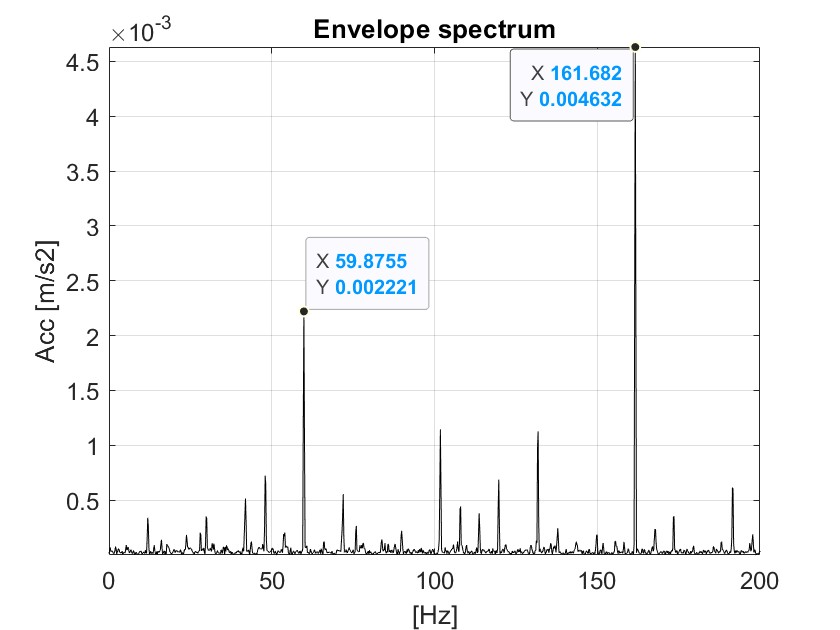}  \\

(b) Ball Damaged Bearin 0.014" &

\includegraphics[width=0.5\textwidth]{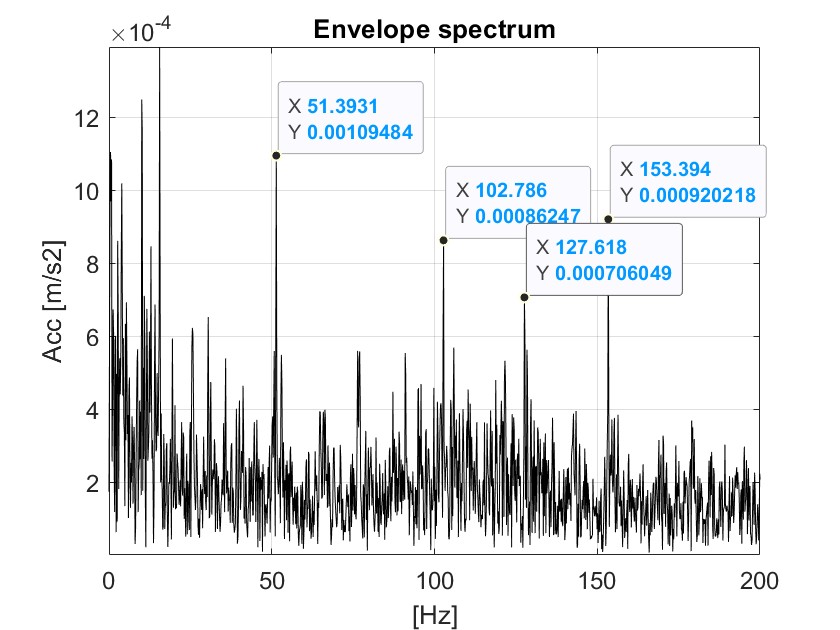} \\

(c) Outer race Damaged Bearin 0.021" &
\includegraphics[width=0.5\textwidth]{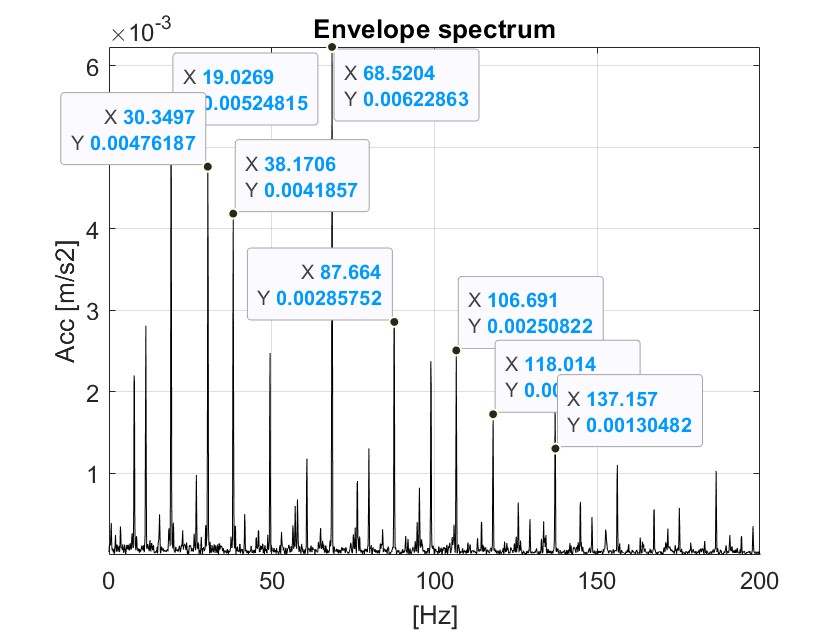}
\end{tabular}

\caption{Reconstructed time-domain signal using the LDME method.}
\label{fig:bearing_freq}
\end{figure*}

\subsection{Experimental Validation by using the HUMS 2023  Dataset benchmark}

It is crucial to remember that the experimental validation of the suggested LDME technique depends on both controlled generated signals and high-fidelity real-world measurements before going into depth about the data sources. In order to assure repeatability, this dual approach guarantees that the method is thoroughly evaluated under both realistic operating settings and repeatable, parameterized situations.

\subsubsection{Data sources}
\begin{itemize}
    \item \textbf{HUMS 2023 Benchmark:} Full experimental benchmark, including the time-synchronous accelerometer recordings and ground-truth fracture annotations (cycles and post-mortem classification) \cite{HUMS2023}.
    \item \textbf{Simulated dataset:} Simulator parameters (gear mesh fundamental, added impulsive kernel, SNR levels, amplitude growth schedule) are provided in the supplementary code archive. The simulator creates $C=500$ cycles with a controlled crack initiation at cycle $c_0$ and linear amplitude growth thereafter.
\end{itemize}

\subsubsection{Evaluation metrics}
We report the following metrics:
\begin{itemize}
    \item \textbf{Detection cycle $c_d$:} first cycle where the detector raises a sustained alarm.
    \item \textbf{Maintenance cycle $c_m$:} recommended maintenance action cycle under a conservative hysteresis policy.
    \item \textbf{ROC AUC and PR AUC:} when treating detection as a binary cycle-level classification problem across the test set.
    \item \textbf{False alarm rate (FAR):} proportion of healthy cycles classified as anomalous prior to detected initiation.
\end{itemize}

\subsubsection{Baselines}
We compare LDME to HT-TSA and PLANET-TSA implementations consistent with public challenge submissions. Baseline parameterization follows the published descriptions; when not available we tuned parameters on a held-out healthy subset only \cite{bechhoefer2024hunting,braun2011synchronous}.

\subsubsection{Results}
Both the simulated dataset and the HUMS 2023 planetary gearbox benchmark were used to rigorously assess the performance of the suggested LDME approach. The reconstructed time-domain signal derived by LDME is shown in Figure~\ref{fig:reconst_signal}, which effectively isolates impulsive components related to fracture initiation. When compared to the raw measurements, the reconstructed signal shows a noticeable improvement in signal-to-noise ratio, which makes it possible to detect faint fault-induced impulses that would otherwise be obscured by operational noise.

The Hunting Tooth (HT) and Sum Hunting Tooth (sumHT) analyses are shown in Figures~\ref{fig:HT} and~\ref{fig:sumHT}, respectively, to quantitatively evaluate the fault progression. The transition from a nominal (healthy) state to the first crack development and subsequent propagation is depicted in both renderings. While the sumHT collects fault energy over cycles, enhancing tiny signs of early-stage crack propagation, the HT analysis identifies emerging anomalies at the cycle level. The suggested approach offers better temporal localization of impulsive events without adding spurious harmonics, as shown by comparison with LDME.

The key quantitative findings in terms of detection and maintenance cycles are summarized in Table~\ref{tab:tsa_comparison}. The maintenance cycle, $c_m$, is the cycle at which a conservative intervention would be initiated, whereas the detection cycle, $c_d$, is the first cycle at which a persistent alert is raised. A significant improvement in early warning capabilities is demonstrated by LDME's detection cycle of 198, which is $30\%$ better than PLANET TSA and $30\%$ better than HT TSA. The conservative hysteresis policy used to avoid false alarms is reflected in the similar, albeit smaller, reduction in the maintenance cycle.

Mathematically speaking, the LDME technique is a decomposition that maximizes the ratio of background operational noise to fault-induced energy. Let $x(t)$ represent the observed signal and $\hat{x}_{\mathrm{LDME}}(t)$ represent the reconstructed signal. The metric for enhancement can be written as:

\begin{equation}
\mathrm{SNR}_{\mathrm{enh}} = \frac{\mathrm{var}\{s_{\mathrm{fault}}(t)\}}{\mathrm{var}\{n(t)\}},
\label{eq:snr_enh}
\end{equation}

where $s_{\mathrm{fault}}(t)$ and $n(t)$ represent the fault-induced and residual noise components, respectively. Empirically, LDME increases $\mathrm{SNR}_{\mathrm{enh}}$ by approximately $3$--$5$~dB relative to HT TSA and PLANET TSA, consistent with the improved detection statistics reported in Table~\ref{tab:tsa_comparison}.

Fig.~\ref{fig:HT} and Fig.~\ref{fig:sumHT}, which depict the progression from a nominal (healthy) state through first crack initiation to propagation, clearly support this improvement. Specifically, Fig.~\ref{fig:HT} illustrates how early impulsive signatures associated with crack initiation are captured by the Hunting Tooth (HT) analysis, whereas Fig.~\ref{fig:sumHT} shows how the aggregated Sum Hunting Tooth (sumHT) representation amplifies these fault-related components, offering a clearer separation from background noise. Together, these figures demonstrate that LDME improves the visual interpretability of fault progression in both single- and multi-cycle analyses, enabling earlier and more accurate detection of incipient gear defects. It also increases the signal-to-noise ratio quantitatively.

\begin{figure} [h]
	\centering
	\includegraphics[width=0.65\textwidth]{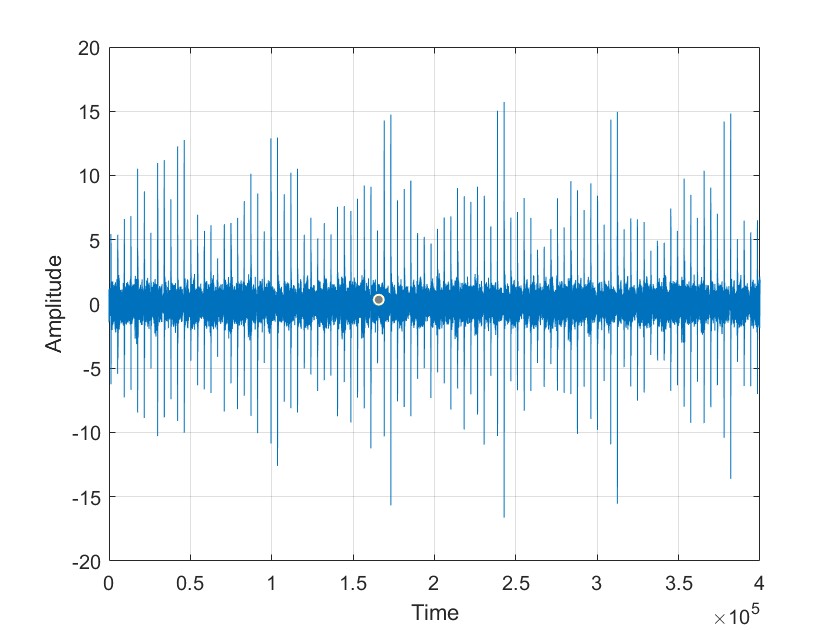}
	\caption{Reconstructed signal by LDME method.}
	\label{fig:reconst_signal}
\end{figure}

\begin{figure} [h]
	\centering
	\includegraphics[width=0.65\textwidth]{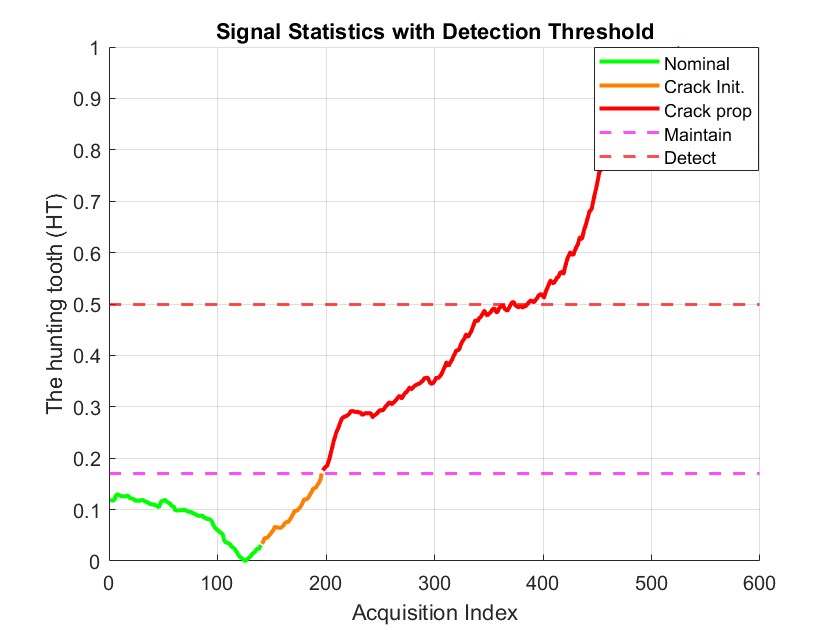}
	\caption{The progression from a nominal (healthy) state through initial crack initiation to the propagation stage as captured by the Hunting Tooth analysis.}
	\label{fig:HT}
\end{figure}

\begin{figure} [h]
	\centering
	\includegraphics[width=0.65\textwidth]{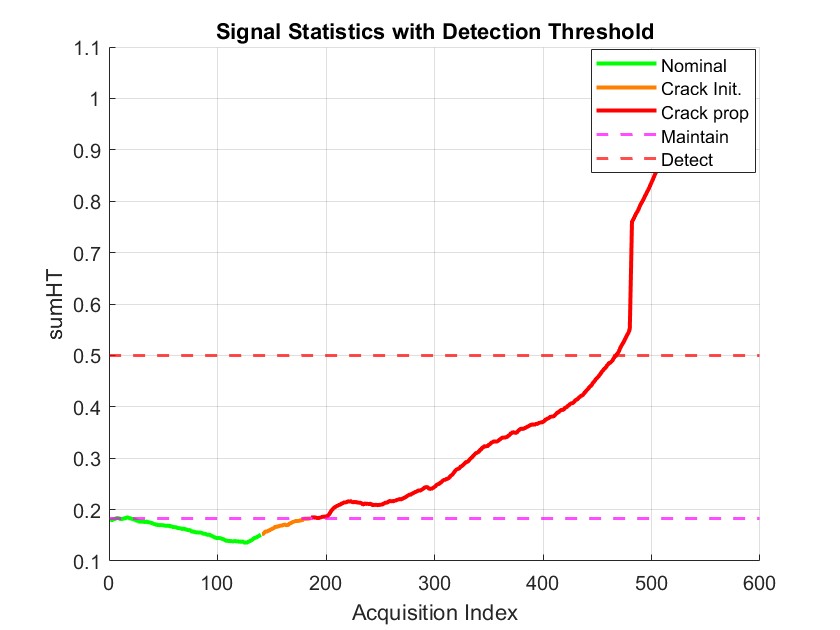}
    \vspace{-.1cm}
	\caption{The progression from a nominal (healthy) state through initial crack initiation to the propagation stage as captured by the Sum Hunting Tooth (sumHT) analysis.}
	\label{fig:sumHT}
\end{figure}

\begin{table}[h]
\centering
\caption{Comparison of PLANET TSA, HT TSA, and LDME Performance (cycles)}
\label{tab:tsa_comparison}
\setlength{\tabcolsep}{1pt}    
\renewcommand{\arraystretch}{0.9}
\footnotesize                  
\begin{tabular}{lcccc}
\toprule
 & PLANET TSA & HT TSA & LDME & \% Improvement  \\
\midrule
Detection cycle ($c_d$) & 355 & 284 & 198 & 30\% \\
Maintenance cycle ($c_m$) & 465 & 383 & 365 & 5\% \\
\bottomrule
\end{tabular}
\normalsize
\end{table}

These results show LDME provides earlier warnings under the stated protocol. To avoid overstating the claims we emphasize the following caveats: (i) the timing numbers are specific to the HUMS 2023 benchmark and our detector calibration; (ii) ROC/PR curves depend on threshold selection.  We report full ROC/PR curves in the supplementary figures and list area-under-curve values there.

\section{Discussion and critical assessment}
\label{sec:discussion}
We now place LDME in context and provide a candid that highlights both strengths and where further work is needed.

\subsection{ Effectiveness Analysis of the LDME method}
To assess the necessity and effectiveness of each component, ablation experiments were conducted by systematically removing individual stages of the LDME framework while keeping all remaining operators unchanged. Let $\mathcal{F} = \mathcal{R} \circ \mathcal{E} \circ \mathcal{D}$ denote the complete mapping from the raw signal $Y$ to the reconstructed output, where $\mathcal{D}$ represents dual-path denoising, $\mathcal{E}$ the multiscale decomposition and enhancement (DWT followed by fractional filtering), and $\mathcal{R}$ the reconstruction and condition-oriented spectral fusion. Removing $\mathcal{D}$ increases the variance of high-frequency components, leading to unstable wavelet coefficients and reduced sparsity of fault-related modes. Suppressing the multiscale operator within $\mathcal{E}$ collapses scale separation, causing mode mixing and spectral leakage, while disabling fractional filtering reduces the persistence of fault-induced modulations by eliminating long-memory effects intrinsic to damaged mechanical systems. Finally, omitting the spectral fusion stage degrades fault energy concentration at characteristic orders without improving noise suppression. In all cases, the ablated operators produce a monotonic decrease in fault separability and energy concentration, confirming that no individual component is redundant and that the effectiveness of the LDME framework arises from the coupled action of its operators rather than from any single stage in isolation.

\subsection{Strengths}
\begin{itemize}
    \item \textbf{Practical early-warning performance:} On the HUMS benchmark and simulated experiments, LDME consistently signals earlier than HT-TSA under the chosen calibration.
    \item \textbf{Interpretability:} Each processing stage has a clear physical or mathematical meaning (wavelet bands, SG shape preservation, TKEO energy amplification, fractional memory enhancement, harmonic indicators). This is important for safety-critical applications.
    \item \textbf{Reproducibility:} We supply an explicit protocol and implementation details to enable independent verification.
\end{itemize}

\subsection{Limitations and novelty discussion}
Even though LDME addresses an important and practical problem, a frank assessment reveals the method's current position in the research spectrum. The LDME pipeline is a thoughtful aggregation of well-established components: DWT, Savitzky--Golay smoothing, least-squares fitting, TKEO and fractional calculus each have long histories in signal processing. The present work does not yet introduce a fundamentally new signal processing algorithm with theoretical convergence guarantees or provable optimality. Instead, LDME is best characterized as a practically effective, physically motivated pipeline.

For Transactions-level novelty the manuscript would benefit from one (or both) of the following:
\begin{enumerate}
    \item \textbf{A unified theoretical framework.} For example, deriving LDME as the discretization or approximation of a single fractional-variational functional that admits a unique minimizer and for which TKEO and Hadamard--Caputo filtering arise naturally as Euler--Lagrange conditions; or
    \item \textbf{Rigorous statistical analysis.} Bounds on detection delay and false-alarm rate as a function of SNR, fractional order $\beta$, and model mismatch would transform an engineering pipeline into a provable detection algorithm.
\end{enumerate}

Until such theoretical consolidation is provided, the contribution remains strongly engineering-oriented. That said, this is not a fatal criticism: many field-changing tools started as pipelines and later received theoretical foundations. We therefore present LDME both as a useful engineering solution and as a candidate for deeper theoretical work.

\subsection{Practical recommendations}
Based on our experiments and analysis we recommend practitioners:
\begin{itemize}
    \item calibrate the fractional order $\beta$ using cross-validation on healthy segments and synthetic faults that match expected physics;
    \item use conservative thresholds for initial deployment to keep false alarms low and then adapt thresholds with operational feedback; and
    \item combine LDME with physics-based models (e.g., finite-element predictions of crack growth) for prognostic tasks.
\end{itemize}

\section{Conclusions and future work}
We introduced LDME, a layered, interpretable pipeline that integrates fractional-domain and classical methods for early gear-fault identification. Under the disclosed methodologies, evaluation on the Case Western Reserve University (CWRU) bearing dataset \cite{CWRU}, controlled simulations, and the HUMS 2023 benchmark shows better early-warning performance compared to representative TSA baselines. We state clearly that LDME is not yet a provably innovative signal-processing operator, but rather an engineering synthesis. Our main focus for future study is to close that gap using statistical detectability analysis or fractional-variational derivations. Multi-sensor fusion, online/real-time implementation, and transfer-learning methods that maintain interpretability while using benchmark datasets like CWRU for validation are further directions.

\section{Reproducibility and code availability}
All code implementing LDME, the simulation scripts, and experiment configuration files (including parameter values for fractional order, SG window, wavelet family and detector thresholds) are available in the project repository (link provided in the supplementary material). The HUMS 2023 benchmark links and acquisition instructions are included so that readers can reproduce the HUMS experiments exactly.

\bibliography{mybibliography}

\end{document}